\newtheorem{definition}{\noindent \textbf{Definition}}
\newtheorem{lemma}{\noindent \textbf{Lemma}}
\newenvironment{proof}{{\noindent\it Proof. }}{\hfill $\blacksquare$\par}
\begin{document}

\title{Two-Way Aerial Secure Communications via Distributed Collaborative Beamforming under Eavesdropper Collusion}

\author{\IEEEauthorblockN{ Jiahui Li\IEEEauthorrefmark{2}, Geng Sun\IEEEauthorrefmark{2}$^*$, Qingqing Wu\IEEEauthorrefmark{3}, Shuang Liang\IEEEauthorrefmark{4}$^*$, Pengfei Wang\IEEEauthorrefmark{5}, Dusit Niyato\IEEEauthorrefmark{6}}
	
\IEEEauthorblockA{\IEEEauthorrefmark{2}{College of Computer Science and Technology, Jilin University, Changchun 130012, China } \\
\IEEEauthorrefmark{3}{Department of Electronic Engineering, Shanghai Jiao Tong University, Shanghai 200240, China}\\
\IEEEauthorrefmark{4}{School of Information Science and Technology, Northeast Normal University, Changchun 130024, China}\\
\IEEEauthorrefmark{5}{School of Computer Science and Technology, Dalian University of Technology, Dalian 116024, China}\\
\IEEEauthorrefmark{6}{School of Computer Science and Engineering, Nanyang Technological University, Singapore 639798, Singapore}\\
E-mails: lijiahui0803@foxmail.com, sungeng@jlu.edu.cn, qingqingwu@sjtu.edu.cn, \\
liangshuang@nenu.edu.cn, wangpf@dlut.edu.cn, dniyato@ntu.edu.sg \\
\IEEEauthorrefmark{1}{Corresponding authors: Geng Sun and Shuang Liang}}
}

\IEEEtitleabstractindextext{%
\begin{abstract}
Unmanned aerial vehicles (UAVs)-enabled aerial communication provides a flexible, reliable, and cost-effective solution for a range of wireless applications. However, due to the high line-of-sight (LoS) probability, aerial communications between UAVs are vulnerable to eavesdropping attacks, particularly when multiple eavesdroppers collude. In this work, we aim to introduce distributed collaborative beamforming (DCB) into UAV swarms and handle the eavesdropper collusion by controlling the corresponding signal distributions. Specifically, we consider a two-way DCB-enabled aerial communication between two UAV swarms and construct these swarms as two UAV virtual antenna arrays. Then, we minimize the two-way known secrecy capacity and the maximum sidelobe level to avoid information leakage from the known and unknown eavesdroppers, respectively. Simultaneously, we also minimize the energy consumption of UAVs for constructing virtual antenna arrays. Due to the conflicting relationships between secure performance and energy efficiency, we consider these objectives as a multi-objective optimization problem. Following this, we propose an enhanced multi-objective swarm intelligence algorithm via the characterized properties of the problem. Simulation results show that our proposed algorithm can obtain a set of informative solutions and outperform other state-of-the-art baseline algorithms. Experimental tests demonstrate that our method can be deployed in limited computing power platforms of UAVs and is beneficial for saving computational resources.
\end{abstract}

\begin{IEEEkeywords}
Distributed collaborative beamforming, eavesdropper collusion, multi-objective optimization, UAV secure communications, virtual antenna arrays.
\end{IEEEkeywords}
}

% make the title area
\maketitle
% For peer review papers, you can put extra information on the cover
% page as needed:
% \ifCLASSOPTIONpeerreview
% \begin{center} \bfseries EDICS Category: 3-BBND \end{center}
% \fi
%
% For peerreview papers, this IEEEtran command inserts a page break and
% creates the second title. It will be ignored for other modes.

\IEEEdisplaynontitleabstractindextext
\IEEEpeerreviewmaketitle

% \IEEEraisesectionheading{

%
%Introduction
%
\section{Introduction} % (fold)
\label{sec:introduction}

\par Unmanned aerial vehicles (UAVs), also referred to as drones, can be equipped with communication capabilities to provide a wide range of wireless services~\cite{Dai2022}. Due to the properties of high line-of-sight (LoS) probability, flexibility, and low cost, UAVs are of considerable importance in upcoming generation communications and networks~\cite{Zhang2023}. For instance, in some emergency scenarios, UAVs are potential to be quickly deployed as flying relays to assist intermittent terrestrial networks, facilitating coordination and streamlining rescue efforts~\cite{Prasad2023}. Additionally, UAVs are able to harvest data from remote or hard-to-reach monitoring areas and then forward these data to data fusion centers, accomplishing a flexible data collection~\cite{Hydher2022}. Furthermore, UAVs can act as flying base stations to provide ground users with coverage, enabling on-demand and cost-effective network support~\cite{Abubakar2023}.  

\par Among various applications, UAV-enabled air-to-air (A2A) communication is a promising platform in terrestrial network inadequate scenarios, \textit{e.g.,} emergency assistance, wildlife monitoring, and military surveillance~\cite{Ma2020}. Since the UAV transmitter and receiver are both at high altitudes, the multi-path effect caused by terrain is slight. Thus, UAV-enabled A2A communications are often with increased stability and throughput~\cite{Shi2022}. However, due to the high LoS probability, A2A communications are vulnerable to eavesdropping attacks, particularly when multiple eavesdroppers are colluding~\cite{Cao2023}. In this case, multiple attackers cannot be perfectly detected and they can work together to hide their activities, which renders it challenging for aerial communication systems against such attacks. In general, upper-layer encryptions are well-known approaches to wireless communication links and these approaches provide confidentiality against eavesdroppers. Nevertheless, it is formidable for hardware-limited UAVs to implement such complex encryption algorithms that require significant computational resources. 

\par Physical layer security is a bright secure approach, which utilizes physical characteristics of the wireless channel, such as the randomness of the channel fading, to realize secure communications~\cite{Feng2023}. As such, it can achieve secure communication without relying on complex encryption algorithms or authentication protocols. Due to the high mobility of UAVs, many methods have been proposed to use physical layer security against eavesdroppers. However, the existing works mostly depend on the methods of trajectory design (\textit{e.g.},~\cite{Maeng2022,Yapici2021,Yin2022}) and power allocation (\textit{e.g.},~\cite{Zhang2019,Na2022}) of UAVs, which may confront two major issues. \uline{\textit{First}}, the frequent trajectory changes will undoubtedly increase the energy consumption of UAVs, thereby reducing the corresponding UAV service time. Likewise, power allocation actively restricts the transmit power of UAVs, resulting in overlong transmission time and increased hovering energy costs. \uline{\textit{Second}}, it is complicated for these two methods to handle eavesdropper collusion. This is because the incorrect handling of any eavesdropper will lead to the failure of the total security schemes. Thus, it is desirable to investigate a novel physical layer security to avoid eavesdropper collusion in A2A communications. 

\par In this work, we propose to use distributed collaborative beamforming (DCB)~\cite{Sun2021a} to achieve an efficient secure scheme against eavesdropper collusion. Consider a typical two-way aerial communication between two UAV swarms, we construct each UAV swarm as a UAV virtual antenna array (UVAA). When the system suffers severe secure threats of eavesdropper collusion, we can carefully design the signal distributions (\textit{i.e.}, beam pattern) of UVAAs to suppress the signal strengths toward each detected eavesdropper simultaneously. Meanwhile, the overflowing needless signals of UVAA can be controlled against potential hidden eavesdroppers. As such, by precise signal control, signals toward all eavesdroppers are suppressed, thereby effectively solving the eavesdropper collusion issues.

\par Nevertheless, it is not straightforward to achieve qualified beam patterns for UVAAs. On the one hand, the beam patterns of UVAAs are affected by the positions and transmit powers~\cite{Jayaprakasam2017} of UAVs, which need to be attentively determined. On the other hand, during UAVs fine-tuning their positions, the energy cost of UAVs will no doubt increase. Thus, we should well capture and balance the trade-offs between secure performance and energy efficiencies of UAVs. Accordingly, we aim to control the decision variables above and consider a multi-objective optimization method in DCB-enabled A2A communications under eavesdropper collusion. Our main contributions are listed as follows. 

\begin{itemize}

  \item \textit{Novel Paradigm for Solving Multi-eavesdropper Collusion:} We consider a typical DCB-enabled aerial two-way communication of two UAV swarms under eavesdropper collusion. Specifically, eavesdroppers collude based on signal detection, which leads to the worst wiretap case. In this case, we introduce DCB into each UAV swarm and use the signal processing method to handle eavesdroppers. To the best of our knowledge, this is the first work that uses DCB against eavesdropper collusion. 
  % \hl{maximum ratio combining (MRC)}

  \item \textit{NP-hard Optimization Problem Formulation:} We aim to maximize the two-way secrecy rate of the aerial communication and suppress the needless overflowing signals to increase the secure performance of the system. We also minimize the energy costs of the UAVs to ensure the energy-efficiency. Due to their conflicting relationships, we formulate a multi-objective optimization problem (MOP) to optimize these goals simultaneously. Then, we prove that it is an NP-hard problem.

  \item \textit{Enhanced Multi-objective Swarm Intelligence Method:} Due to the NP-hardness, the optimization problem is difficult to be solved. Thus, we propose a novel enhanced multi-objective swarm intelligence method to solve the problem. The proposed algorithm is enhanced by the properties of the formulated MOP, and it is able to find candidate solutions to the NP-hard problem with low computational complexity. 
  
  \item \textit{Simulations and Experiments:} Simulation results show that our proposed method outperforms various benchmarks and state-of-the-art algorithms, and also is robust under phase synchronization error, channel state information (CSI) error, and UAV jitter. Moreover, experimental tests demonstrate that our method can be deployed in limited computing power platforms of UAVs and is beneficial to saving computational resources.

\end{itemize}

\par The rest of this paper is arranged as follows. Section \ref{sec:related_work} reviews the related works. Section \ref{sec:models_and_preliminaries} presents the models and preliminaries. Section \ref{sec:problem_formulation_and_analysis} formulates the MOP. Section \ref{sec:algorithm} proposes the algorithm. Section \ref{sec:simulation_results_and_analysis} shows the simulation results and Section \ref{sec:conclusion} concludes the paper. 

% section introduction (end)

%
%Related works
%
\section{Related Work} % (fold)
\label{sec:related_work}

\par In this section, we briefly introduce the related works of A2A communications, physical layer security, and DCB in UAV networks to highlight our innovations and contributions. 
% we aim to use DCB to achieve physical layer security in the A2A communications. Thus,

%
% A2A Communication in UAV Networks
%
\vspace{+0.5mm}
\noindent \textbf{A2A Communication-enabled by UAVs.} Due to the high-LoS probability, UAV-enabled A2A communications achieved significant stability and throughput. Ma \textit{et al.}~\cite{Ma2020} used a three-dimensional (3D) Markov mobility model to characterize the movements of the UAV for the systematic evaluation of UAV-enabled A2A communications. Other works related to A2A communications also contain~\cite{Li2020,Humadi2021,Fakhreddine2022} and their cited previous literature. However, A2A communications suffer from severe security problems, especially in eavesdropper collusion cases. None of these existing works consider proposing a physical layer security scheme to solve this issue. 
% Moreover, Li \textit{et al.}~\cite{Li2020} integrated the coordinated multi-point transmission technique with stochastic geometry theory, and thereby proposed a novel 3D cellular model consisting of aerial base stations and aerial user equipment.

%
% Physical Layer Security in UAV Networks
%
\vspace{+0.5mm}
\noindent \textbf{Physical Layer Security in UAV Networks.} On the one hand, several existing works considered power allocation-based physical layer security methods in UAV networks~\cite{Yapici2021,Yin2022}. Maeng \textit{et al.}~\cite{Maeng2022} proposed a linear precoder design for UAVs and derived optimal power splitting factor. On the other hand, trajectory design is another effective approach to handle eavesdroppers~\cite{Zhang2019}. Nevertheless, finding a suitable power or trajectory to avoid all eavesdroppers is almost impossible. Thus, this motivates us to investigate a novel physical layer security to solve the issue of eavesdropper collusion. 
% In general, physical layer security consists of two main types of methods. However, power allocation also may reduce the transmit power of UAVs, resulting in overlong transmission time and increased hovering energy costs. Nevertheless, the frequent trajectory design will undoubtedly increase the energy consumption of UAVs, which may reduce the service time of the UAV.  the frequent trajectory design will undoubtedly increase the energy consumption of UAVs, which may reduce the service time of the UAV. Moreover, For instance, the authors in established specific channels for users and eavesdroppers by jointly optimizing the UAV's trajectory and the transmit power of the legitimate transmitter.

%
% DCB in UAV Networks
%
\vspace{+0.5mm}
\noindent \textbf{DCB in UAV Networks.} Boosted by the mobility of UAVs, DCB can achieve promising prosperities to improve the transmission abilities of UAV communication systems. For example, the authors in~\cite{Mohanti2022} proposed a complete algorithmic framework or system implementation of DCB on a UAV swarm. References~\cite{Sun2021,Shi2022,Li2021} used UAV-enabled DCB to achieve long-range transmission, two-way communications, and physical layer security, respectively. However, these works overlook the eavesdropper collusion issues. 
%  Additionally, some efforts have been put into validating the real-world feasibility of UAV-based DCB. Mozaffari \textit{et al.}~\cite{Mozaffari2019} constructed a UAV-enabled linear antenna array to minimize the wireless transmission time of UAVs. Moreover,

\par Different from the existing works, we aim to use DCB against the worst wiretap case of eavesdropper collusion. This is challenging since we need to balance UAVs' secure performance and energy efficiency, and control complex decision variables. In what follows, we will present the models and preliminaries of the considered system, thereby characterizing the relationships between the decision variables with secure performance and energy efficiency of the system.

%
%Models and preliminaries
%
\section{Models and Preliminaries} % (fold)
\label{sec:models_and_preliminaries}

\par In this section, we first present the system overview. Then, we detail the considered models, including the virtual antenna array, A2A transmission, MRC eavesdropping, and UAV energy cost models, to characterize the objectives and decision variables. The main notations are presented in Table \ref{tab:notations}.

% Please add the following required packages to your document preamble:
% \usepackage{booktabs}
\begin{table}[]
\centering
\caption{Main notations}
\label{tab:notations}
\begin{tabularx}{3.5in}{p{1.5cm}p{6.5cm}}
\toprule
\textbf{Symbol}         & \textbf{Definition}                                                     \\ \midrule
$\mathcal{U}_1$, $\mathcal{U}_2$ & The sets of the rotary-wing UAVs in the first and second swarms, respectively \\
$\mathcal{KE}$, $\mathcal{UE}$ & The sets of the known and unknown eavesdroppers, respectively \\
$AF_i$, $G^{A2A}_i$ & The array factor and gain of the $i$th UVAA, respectively \\
$P^{\text{LoS}}_{i,k}$ & The probability of $i$th UVAA and eavesdropper $k$\\
$C_{KE}$ & The minimum two-way known secrecy capacity of the system\\
$E_{i,j}$ & The energy cost of $j$th UAV of $i$th UVAA for performing the communication\\
$\boldsymbol{P}$, $\boldsymbol{\Omega}$ & The positions and excitation current weights of the UAVs of two UVAAs, respectively \\
$\boldsymbol{u}$ & The select UAV receivers of two UVAAs \\
$\boldsymbol{X}$ & The solution to the formulated MOP \\
$\mathcal{P}$, $\mathcal{A}$ & The population and archive of the proposed MOALO-RSI algorithm, respectively \\
\bottomrule
\end{tabularx}
\end{table}

% Subsection
% System Overview
%
\vspace{+0.5mm}
\noindent \textbf{System Overview.} As shown in Fig.~\ref{fig:network-model}, we consider a two-way aerial communication system under eavesdropper collusion. As can be seen, the system has two UAV swarms denoted as $\{ \mathcal{U}_i | i \in 1, 2\}$ deployed in a network inadequate area for emergency assistance, wildlife monitoring, military surveillance, etc. Note that we can easily extend two UAV swarms system to multiple by introducing routing and networking protocols. In this work, we only consider two UAV swarms for the sake of simplicity and easy access to insights. Accordingly, a set of randomly distributed UAVs denoted as $\mathcal{U}_1 \triangleq \{j | 1, 2, \dots, N_{U}\}$ are in a limited area denoted as $A_{U1}$. Moreover, another UAV swarm denoted as $\mathcal{U}_2 \triangleq \{j | 1, 2, \dots, N_{U}\}$ is dispatched in area $A_{U2}$ which is far and no overlapping from $A_{U1}$. In both the areas and the surrounding ground, there are several eavesdroppers denoted as $\mathcal{E} \triangleq \{k | 1, 2, \dots, N_{E}\}$ that are randomly placed. We assume that some eavesdroppers can be detected~\cite{Zhang2019} while others are undetectable but are potential threats. Accordingly, the eavesdroppers in $\mathcal{E}$ can be divided into known eavesdroppers and unknown eavesdroppers as $\mathcal{KE}$ and $\mathcal{UE}$, respectively. Moreover, we assume that each UAV of $\mathcal{U}_1$ and $\mathcal{U}_2$ is equipped with a single omni-directional antenna, and all the eavesdroppers collude.
% in an MRC manner~\cite{}
 
\par The system works as follows. Consider a particular case that the UAVs in $\mathcal{U}_1$ and $\mathcal{U}_2$ need to exchange emergency information and data. Due to the lack of terrestrial access points, these two UAV swarms will construct two UVAAs (\textit{i.e.}, UVAA 1 and UVAA 2) simultaneously. We assume that the UAVs within the same virtual antenna array are synchronized regarding the carrier frequency, time, and initial phase by using the methods in~\cite{Mohanti2022}. Moreover, the data sharing among UAVs of each virtual antenna array is achieved by using the method in~\cite{Feng2013}. In addition, we assume that the UAVs obtain the quantization version of CSI via method in~\cite{Ahmad2022}. Following this, they will select a suitable receiver from a different UAV swarm as the receiver and then achieve a two-way aerial communication. Without loss of generality, we consider a 3D Cartesian coordinate system. As such, the positions of the $j$th UAV in $i$th swarm $\mathcal{U}_i$ and the $k$th eavesdropper are indicated as $(x^{U}_{i,j}, y^{U}_{i,j}, z^{U}_{i,j})$ and $(x^{E}_{k}, y^{E}_{k}, 0)$, respectively.

% Figure
% System_Model
%
\begin{figure}
  \centering
  \includegraphics[width=3.5in]{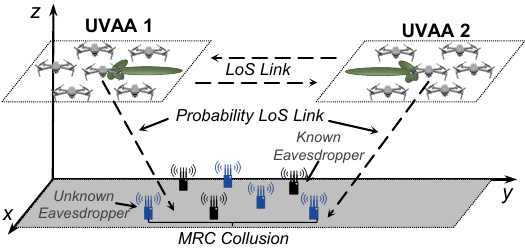}
  \caption{A two-way DCB-enabled aerial communication system under the known and unknown eavesdropper collusion.}
  \label{fig:network-model}
\end{figure}

% Subsection
% Virtual Antenna Array Model
%
\vspace{+0.5mm}
\noindent \textbf{Virtual Antenna Array Model.} The electromagnetic waves emitted by UAV antennas of a UVAA will be superposed and offset, thereby producing a beam pattern with a sharp mainlobe and low sidelobes. Mathematically, we use array factor to measure the signal strengths in different directions. Let $(x^{U}_{i,j}, y^{U}_{i,j}, z^{U}_{i,j})$ and $\omega_{i,j}$ be the \textit{3D coordinates} and \textit{excitation current weight} of $j$th UAV in $\mathcal{U}_i$. Then, the array factor of $i$th UVAA is given by
\begin{equation}
   \label{eq:AF}
   \begin{aligned}
      A&F_i(\theta, \phi)= \\
      &\sum \limits _{j = 1}^{N_{U}} \omega_{i,j} e^{j^{u} \left [ \frac{2\pi}{\lambda} \left ({{{ x^{U}_{i,j} \sin \theta \cos \phi + {y^{U}_{i,j}}\sin \theta \sin \phi + {z^{U}_{i,j}}}\cos \theta } }\right) + \Phi_{i,j}\right]},
   \end{aligned}
\end{equation}
\noindent where $\theta$ and $\phi$ which range $[0, \pi]$ and $[-\pi, \pi]$ are the elevation and azimuth angles from the center of $\mathcal{U}_i$ ($x^c_i$, $y^c_i$, $z^c_i$) to any receiver, respectively. Other parameters shown in Eq.~\eqref{eq:AF} are related to communications, \textit{e.g.}, $j^{u}$ is the imaginary unit and $\Phi_{i,j}$ is the initial phase of $j$th array element in $i$th UVAA.

% Subsection
% A2A transmission model
%
\vspace{+0.5mm}
\noindent \textbf{A2A Transmission Model.} Due to the high altitudes of UAVs and usage of DCB, the A2A transmission should follow an LoS channel condition~\cite{Mozaffari2019}. Let $d^{A2A}_i$ denote the distance between the center of $i$th UVAA and the corresponding receiver, the transmission rate is given by
\begin{equation}
  \label{eq:R_A2A}
  \begin{aligned}
    R^{A2A}_i=B \log _{2}(1+\frac{P^{t}_i K_{0} G_{i}^{A2A} {d^{A2A}_i}^{-\alpha}}{\sigma^{2}}).
  \end{aligned}
\end{equation}
\noindent Other parameters in Eq.~\eqref{eq:R_A2A} are related to communications. Specifically, $B$ is the bandwidth, $P^t_i$ is the total transmission power of the $i$th UVAA, $K_{0}$ is the constant path loss coefficient, and $\sigma^{2}$ is the noise power. Another key parameter in Eq.~\eqref{eq:R_A2A} is the antenna gain $G_{i}^{A2A}$. Let $(\theta_i, \phi_i)$ denote the direction from the center of $i$th UVAA to the receiver, $G_{i}^{A2A}$ is expressed as 
\begin{equation}
  \label{eq:gain}
  \begin{aligned}
    G_{i}^{A2A}=\frac{4\pi\left|AF_i\left(\theta_{i}, \phi_{i}\right)\right|^{2} w\left(\theta_{i}, \phi_{i}\right)^{2}}{\int_{0}^{2 \pi} \int_{0}^{\pi}|AF_i(\theta, \phi)|^{2} w(\theta, \phi)^{2} \sin \theta \mathrm{d} \theta \mathrm{d} \phi} \eta ,
  \end{aligned}
\end{equation}
\noindent where $w{(\theta,\phi)}$ is the magnitude of the far-field beam pattern of each UAV, $\eta \in [0, 1]$ is the antenna array efficiency~\cite{Mozaffari2019}. 
% where $d^{A2A}_i= \sqrt{ (x^c_i-x^r_i)^2+(y^c_i-y^r_i)^2+(z^c_i-z^r_i)^2 }$, in which ($x^r_i$, $y^r_i$, $z^r_i$) is the location of the receiver of $i$th UVAA.

% Subsection
% MRC eavesdropping model
%
\vspace{+0.5mm}
\noindent \textbf{Eavesdropper Collusion Model.} The ground-placed eavesdroppers suffer a probability LoS channel condition. Let $\theta_{i, k}$ denote the elevation angle between the center of $i$th UVAA and eavesdropper $k$, then the LoS probability is~\cite{Faraci2019}
\begin{equation}
  \label{eq:LoS_P}
  \begin{aligned}
    P^{\mathrm{LoS}}_{i,k}=\left(1+b_1 e^{-b_2(\theta_{i, k}-b_1)} \right)^{-1},
  \end{aligned}
\end{equation}
\noindent where $b_{1}$ and $b_{2}$ are constant values determined by environment. Then, the corresponding non-line-of-sight (NLoS) probability is given by $P^{\mathrm{NLoS}}_{i,k}=1-P^{\mathrm{LoS}}_{i,k}$. Accordingly, let $d_{i,k}$ and $G_{i,k}$ denote the distance and antenna gain between the center of $i$th UVAA and the eavesdropper $k$, respectively, the corresponding SNR is expressed as 
\begin{equation}
  \begin{aligned}
    \gamma_{i,k}= \frac{P^t_i K_0 G_{i,k} d_{i,k}^{-\alpha}[P^{\mathrm{LoS}}_{i,k} \mu_{\mathrm{LoS}}+P^{\mathrm{NLoS}}_{i,k} \mu_{\mathrm{NLoS}}]^{-1}} {\sigma^2},
  \end{aligned}
\end{equation}
\noindent where $\mu_{\mathrm{LoS}}$ and $\mu_{\mathrm{NLoS}}$ are the attenuation factors of LoS and NLoS links, respectively.
% $d_{i,k}= \sqrt{ (x^c_i-x^E_k)^2+(y^c_i-y^E_k)^2+(z^c_i-z^E_k)^2 }$, and

\par The eavesdroppers adopt maximum ratio combining (MRC) based on signal detection~\cite{Dung2021}. This is due to the fact that MRC collusion leads to the worst wiretap case and is easy to extend for other eavesdropper collusion conditions. Specifically, each eavesdropper can be regarded as an antenna in a multi-antenna system. Then, the system weights and combines the received signals from colluded eavesdroppers via MRC diversity-combining technique~\cite{Dung2021}. In this case, after assigning the best-weighted value of different branches, the combined output SNR from the $i$th UVAA is 
\begin{equation}
  \label{eq:snr_all}
  \begin{aligned}
    {\gamma_{\Sigma}}_i= \sum_{k=1}^{N_{E}} \gamma_{i,k}.
  \end{aligned}
\end{equation}

\noindent As such, the achievable rate of colluding eavesdroppers from $i$th UVAA is given by $R_{i}^{E}= B \log_2 (1+ {\gamma_{\Sigma}}_i)$. Using $R^{A2A}_i$ and $R_{i}^{E}$, we can express the \textit{minimum two-way achievable secrecy capacity} of the two-way communication as
\begin{equation}
  \label{eq:ce}
  \begin{aligned}
    C_E=\min_{i \in \{1, 2\}, k \in \mathcal{E}} \{R^{A2A}_i- R_{i}^{E}\}.
  \end{aligned}
\end{equation}

\noindent As can be seen, our defined secure transmission performance considers the minimum capacity between two-way communications, which means that improving $C_E$ can enhance the secure performance in both UVAA 1 to UVAA 2 and UVAA 2 to UVAA 1. 

\par However, when we formulate the optimization problem, only the detailed information of known eavesdroppers is feasible. Thus, we also define the \textit{minimum two-way known secrecy capacity} as follows:
\begin{equation}
  \label{eq:cke}
  \begin{aligned}
    C_{KE}=\min_{i \in \{1, 2\}, k \in \mathcal{KE}} \{R^{A2A}_i- R_{i}^{E}\}.
  \end{aligned}
\end{equation}

\noindent It can be seen from Eqs.~\eqref{eq:LoS_P}-\eqref{eq:cke} that both the minimum two-way achievable secrecy capacity and minimum two-way known secrecy capacity are determined by the array factors of the UVAAs. In other words, we can carefully control the signal distributions of UVAAs to avoid these eavesdroppers.

% Subsection
% UAV Energy Cost Model
%
\vspace{+0.5mm}
\noindent \textbf{UAV Energy Cost Model.} We adopt the typical rotary-wing UAV, where the propulsion energy cost is the main component of its total energy cost while other components can be negligible~\cite{Sun2021}. Let $v$ be the fly speed of the UAV, the energy cost of the UAV for flying in the horizontal plane is given by
\begin{equation}
   \label{eq:energy-2d}
   \begin{aligned}
      P(v)=&P_{B}(1+\frac{3v^2}{v_{tip}^2})+P_{I}(\sqrt{1+\frac{v^4}{4v_0^4}}-\frac{v^2}{2v_0^2})^{1/2}+\\&\frac{1}{2}d_0\rho sAv^3,
   \end{aligned}
\end{equation}
\noindent where $P_{B}$, $P_{I}$, $v_{tip}$, $v_{0}$, $d_{0}$, $s$, $\rho$, and $A$ are blade profile constant, induced powers constant, tip speed of the rotor blade, mean rotor induced velocity in hovering, fuselage drag ratio, rotor solidity, air density, and rotor disc area, respectively.

\par According to Eq.~\eqref{eq:energy-2d}, the propulsion energy cost of UAV can be extended into the 3D form, \textit{i.e.}, 
\begin{equation}
  \label{eq:energy-3d}
   \begin{aligned}
     E(T) \approx &\int_0^TP( v(t))dt+{\frac12m_{D}(v{(T)}^2- v{(0)}^2)}+\\&{m_{D}g(h(T)-h(0))},
   \end{aligned}
\end{equation}
\noindent where $v(t)$ is the instantaneous drone speed at time $t$, $T$ represents the end time of the flight, $m_{D}$ is the aircraft mass of a UAV, and $g$ is the gravitational acceleration.

% Section
% Problem Formulation and Analysis
%
\section{Problem Formulation and Analysis} % (fold)
\label{sec:problem_formulation_and_analysis}

\par In this section, we formulate a secure and energy-efficient two-way aerial communication problem. First, we state the main idea of the optimization problem. Then, the decision variables and optimization objectives are presented. Finally, we construct an MOP and prove that it is an NP-hard problem.

% Subsection
% Problem Statement:
%
\vspace{+0.5mm}
\noindent \textbf{Problem Statement.} The considered system concerns two goals, \textit{i.e.}, improving the achievable secrecy capacity and reducing the energy cost for fine-tuning the positions of UAVs. 

\par \textit{(i)} Since we cannot obtain the exact information of unknown eavesdroppers, we cannot access the minimum two-way achievable secrecy capacity shown in Eq.~\eqref{eq:ce}. Thus, we aim to take two measures simultaneously to optimize the achievable secrecy capacity. \uline{\textit{First}}, we aim to maximize the known secrecy capacity shown in Eq.~\eqref{eq:cke} to reduce the signal qualities obtained by known eavesdroppers. \uline{\textit{Second}}, we can minimize all the signals except the target direction, thereby avoiding potential unknown eavesdroppers. These two measures can be achieved by controlling the array factor (which is determined by the \textit{\uline{3D positions}} and \uline{\textit{excitation current weights}}) and \uline{\textit{aerial receiver}} selection. 

\par \textit{(ii)} We need to fine-tune UAVs' positions to enhance DCB performance, which will result in extra energy costs. Thus, the \uline{\textit{position changes}} of UAVs should be minimized by considering the energy efficiency.

% Subsection
% Decision Variables
%
\vspace{+0.5mm}
\noindent \textbf{Decision Variables.} Based on above analyses, the following decision variables need to be determined jointly: \textit{(i)} $\boldsymbol{P} = \left\{(x_{i,j}^U, y_{i,j}^U, z_{i,j}^U) | i \in \{ 1, 2 \}, j \in \mathcal{U}_i \right\}$, a matrix consisting of continuous variables denotes the 3D positions of UAVs in UVAA1 and UVAA2 for performing DCB. \textit{(ii)} $\boldsymbol{\Omega} = \left\{ \omega_{i,j} | i \in \{ 1, 2 \}, j \in \mathcal{U}_i \right\} $, a matrix consisting of continuous variables denotes the excitation current weights of UAVs in UVAA1 and UVAA2 for performing DCB. \textit{(iii)} $\boldsymbol{u} = \left\{ u_i | i \in \{1, 2\} , u_i \in \mathcal{U}_i \right\} $, a vector consisting of discrete variables represents the IDs of selected aerial receivers of UVAA 1 and UVAA 2. 
% \begin{figure*}
%   \begin{equation}\label{eq:decision variable}
%     \begin{aligned}
%       \boldsymbol{X} = & [\boldsymbol{P}, \boldsymbol{\Omega}, \boldsymbol{u}] = &
%       \begin{bmatrix}
%         \begin{smallmatrix}  
%           x_{1, 1}^{U}, & \cdots, & x_{1, N_{U}}^{U}, & y_{1, 1}^{U}, & \cdots, & y_{1, N_{U}}^{U}, & z_{1, 1}^{U}, & \cdots, & z_{1, N_{U}}^{U}, &\omega_{1, 1}, & \cdots, & \omega_{1, N_{{UAV}}}, & u_{1} \quad \\
%           \undermat{\boldsymbol{P}}{x_{2, 1}^{U}, & \cdots, & x_{2, N_{U}}^{U}, & y_{2, 1}^{U}, & \cdots, & y_{2, N_{U}}^{U}}, & z_{2, 1}^{U}, & \cdots, & z_{2, N_{U}}^{U}, &\undermat{\boldsymbol{\Omega}}{\omega_{2, 1}, & \cdots, }& \omega_{2, N_{{UAV}}}  \undermat{\boldsymbol{u}}{,} & u_{2} \quad
%         \end{smallmatrix}
%       \end{bmatrix}.\\
%       \vspace{+1 mm}
%     \end{aligned}
%   \end{equation}
% \hrulefill
% \end{figure*}

% Subsection
% Decision Variables
%
\vspace{+0.5mm}
\noindent \textbf{Optimization Objectives.} To achieve DCB-enabled secure and energy-efficient two-way communication, we consider the following objectives simultaneously.

\par \textit{Optimization Objective 1}: Our first objective is to maximize the minimum two-way known secrecy capacity of the system for avoiding known eavesdroppers. To this end, we need to jointly optimize the $\boldsymbol{P}$, $\boldsymbol{\Omega}$, and $\boldsymbol{u}$. Then, the first objective is given by 
\begin{equation}
  \label{eq: objecitve1}
  \begin{aligned}
    f_{1}(\boldsymbol{P}, \boldsymbol{\Omega}, \boldsymbol{u})= C_{KE}.
  \end{aligned}
\end{equation}

\par \textit{Optimization Objective 2}: Our second objective is to minimize the signal densities of the UVAAs in all directions to avoid unknown eavesdroppers. Specifically, we adopt the \textit{maximum sidelobe level (SLL)} for evaluating the signal strength except for the targeted direction of a UVAA. Let $(\boldsymbol{\theta}^{SLL}_i, \boldsymbol{\phi}^{SLL}_i)$ denote the direction set except targeted direction, the maximum SLL of $i$th UVAA is expressed as 
\begin{equation}
  \begin{aligned}
    f_{SLL_i} = \frac{ \max|AF_i(\boldsymbol{\theta}^{SLL}_i, \boldsymbol{\phi}^{SLL}_i)|}{AF_i(\theta_i, \phi_i)}.
  \end{aligned}
\end{equation}
\noindent Then, we minimize the maximum term among the maximum SLLs of UVAAs. To this end, we jointly optimize $\boldsymbol{P}$ and $\boldsymbol{\Omega}$. As such, the second objective is given by
\begin{equation}
  \label{eq: objecitve2}
  \begin{aligned}
    f_{2}(\boldsymbol{P}, \boldsymbol{\Omega})= \max_{i \in \{1, 2\}} \{ f_{SLL_i} \}. 
  \end{aligned}
\end{equation}

\noindent Accordingly, we can minimize Eq.~\eqref{eq: objecitve2} to optimize the maximum SLLs of UVAA 1 and UVAA 2 simultaneously. 

\par \textit{Optimization Objective 3}: Our third objective is to minimize the energy costs of all UAVs for fine-tuning their positions. We optimize the $\boldsymbol{P}$ to achieve this goal, and the third objective is given by 
\begin{equation}
  \label{eq: objecitve3}
  \begin{aligned}
    f_{3}(\boldsymbol{P})= \sum_{i \in \{1, 2\}} \sum_{j \in \mathcal{U}_i} E_{i,j},
  \end{aligned}
\end{equation}
\noindent where $E_{i,j}$ is the energy cost of $j$th UAV of $i$th UVAA for moving to the DCB position. Note that we calculate $E_{i,j}$ according to $\boldsymbol{P}$ and the original position of UAVs $\boldsymbol{P}^{r}$ by using the method in~\cite{Sun2021}. Following~\cite{Sun2021}, we assume that the UAVs fly first horizontally and then vertically, as this approach is the easiest to be controlled by automatic flight control.

% Subsection
% Problem Formulation
%
\vspace{+0.5mm}
\noindent \textbf{Problem Formulation.} By considering these conflicting objectives, the optimization problem can be given in an MOP formulation as 
\begin{subequations}
  \label{eq:formulation}
  \begin{align}
    {\underset{\boldsymbol{X} = \{\boldsymbol{P}, \boldsymbol{\Omega}, \boldsymbol{u}\} }{\text{min}}} \ & F=\{-f_{1}, f_{2}, f_{3}\},\\
    \text{s.t.} \quad \quad
    & (x_{i,j}^U, y_{i,j}^U, z_{i,j}^U) \in \mathbb{R}^{3 \times 1}_i , \forall i \in \{1, 2\}, \forall j \in \mathcal{U}_i, \label{eq:const1}\\
    & 0 \leqslant \omega_{i, j} \leqslant  1, \forall i \in \{1, 2\}, \forall j \in \mathcal{U}_i \label{eq:const2},\\
    & u_i \in \mathcal{U}_{\{1,2\}-i}, \forall i \in \{1, 2\} , u_i \in \mathcal{U}_i \label{eq:const3}, \\
    & d_{(j_1, j_2)}^i \geq d_{min}, \forall i \in \{1, 2\}, \forall j_1, j_2 \in \mathcal{U}_i \label{eq:const4},
  \end{align}
\end{subequations}

\noindent where $\boldsymbol{X}$ is the decision variable set of the problem. Then, $\mathbb{R}^{3 \times 1}_i$ is the reachable area of the $i$th UVAA. Moreover, $\omega_{i, j}$ indicates the ranges of the excitation current weights. In particular, $\omega_{i, j}=0$ means the antenna array of the UAV is switched off, while $\omega_{i, j}=1$ indicates that the antenna transmits signals in the maximum transmission power. In addition, constraint \eqref{eq:const3} shows that the UVAA must select a receiver from a different UAV swarm. Furthermore, constraint \eqref{eq:const4} presents a hard condition, \textit{i.e.}, minimum separation distance, to avoid the collision between any UAVs in all UVAAs. 

\par Note that the formulated optimization problem shown in Eq.~\eqref{eq:formulation} can be simplified as a nonlinear multi-dimensional knapsack problem. Thus, the optimization problem is NP-hard. Due to its NP-hardness, the formulated problem cannot be solved in polynomial time. Moreover, since decision variables and optimization objectives have complex and non-linear relationships, it is challenging to give a powerful approximation algorithm or convex optimization method for the problem. Artificial intelligence methods are promising to solve such complex problems efficiently, \textit{e.g.}, reinforcement learning and swarm intelligence algorithm can solve sequence decision-making and static deployment problems, respectively. Considering that our problem is in a static scenario, we aim to propose a novel multi-objective swarm intelligence algorithm to find the candidate solutions to this problem.

% Section
% Algorithm
%
\section{Algorithm Design} % (fold)
\label{sec:algorithm}

\par In this section, we first present the necessary principles of the adopted swarm intelligence algorithm. Then, we enhance the algorithm by the properties of our optimization problem. 

% Subsection
% Framework of population-based heuristic algorithm
%
\subsection{Necessary Principles of Multi-objective Swarm intelligence}
\label{ssec:framework}

\par Swarm intelligence is a branch of artificial intelligence that can potentially solve some NP-hard problems~\cite{Tang2021}. Specifically, swarm intelligence algorithms maintain and improve a set of candidate solutions to the optimization problem. Then, these candidate solutions are improved by the iterations. This work introduces a multi-objective ant-lion optimizer (MOALO)~\cite{Mirjalili2017b}, which is one of the state-of-the-art algorithms, as the solving framework. 
% We first show the main steps of MOALO and then analyze the challenging issues that MOALO needs to be tackled in solving our optimization problem. 

% Subsubsection
% Main Steps of MOALO
%
\vspace{+0.5mm}
\noindent \textbf{Main Steps of MOALO.} The main steps of MOALO are summarized as follows.

\begin{enumerate}
  \item Generate Candidate Solutions: MOALO first generates $N$ candidate solutions, and each solution should be a feasible solution to the optimization problem. As such, we denote the population as $\mathcal{P}=\{\boldsymbol{X}_1, \boldsymbol{X}_2, ..., \boldsymbol{X}_{N}\}$, in which $\boldsymbol{X}_n=[\boldsymbol{P}_n, \boldsymbol{\Omega}_n, \boldsymbol{u}_n]$.
   % (\textit{i.e.}, in the form of Eq.~\eqref{eq:decision variable})

  \item Calculate Objective Functions: The candidate solutions will be evaluated by calculating the objective values by using Eqs.~\eqref{eq:AF}-\eqref{eq:formulation}. Mathematically, we let $F_{n}=[f_{1,n}, f_{2,n}, f_{3,n}]$ denoting the three optimization objective values of the $n$th candidate solutions. 

  \item Compare and Reserve Solutions: MOALO maintains a candidate solution set, namely, archive, to save the elite solutions through the previous iterations, which can be denoted as $\mathcal{A}=\{\boldsymbol{X}_1^A, \boldsymbol{X}_2^A, ...\}$. In each iteration, MOALO combines the current population and archive, \textit{i.e.}, $\mathcal{A} \leftarrow \{\mathcal{A}, \mathcal{P}\}$. Then, the solution with better objective values (\textit{i.e.}, non-dominated solution defined below) will be reserved in $\mathcal{A}$. Due to the nature of MOPs, the comparison between different solutions cannot be done by arithmetic relational operators. Thus, we adopt Pareto dominance to prioritize solutions. 
  % , and $\mathcal{A}'$ is the new archive, \textit{i.e.}, $\mathcal{A} \leftarrow \mathcal{A}'$

  \vspace{0.5 mm}
  \begin{definition} [Pareto dominance]
  $\boldsymbol{X}$ \textit{Pareto dominance} $\boldsymbol{X}'$ iff: $\left[\forall o \in\{1,2,3\},f_o(\boldsymbol{X}) \leq f_o(\boldsymbol{X}')\right] \wedge \left[\exists o \in\{1,2,3\} f_o(\boldsymbol{X}) < f_o(\boldsymbol{X}')\right]$.
  \end{definition}
  
  \par The non-dominated solutions reserved by the archive can be defined as follows. 

  \vspace{0.5 mm}
  \begin{definition} [Non-dominated solution]
  $\boldsymbol{X}$ is called non-dominated solution iff: $\exists\mkern-13mu/ \boldsymbol{X'} \in \mathcal{A}$ Pareto dominance $\boldsymbol{X}$.
  \end{definition}
  \vspace{0.5 mm}

  \par If the archive reaches its maximum capacity, MOALO will use the crowding mechanism to remove the solutions with similar trade-offs from the archive, thereby ensuring MOALO covers as many trade-offs as possible.

  \item Update Candidate Solution: MOALO selects candidate solutions from $\mathcal{A}$ by the roulette wheel selection. Then, MOALO uses the selected candidate solutions to update the candidate solutions as follows:
  \begin{equation}\label{eq:solution-update}
  	\boldsymbol{X}_{n}=(\boldsymbol{X}^{R}+\boldsymbol{X}^{A})/2,
  \end{equation}
  \noindent where $\boldsymbol{X}^{R}$ is the guide solution calculated by heuristic principles of MOALO~\cite{Mirjalili2017b}, and $\boldsymbol{X}^{A}$ is the solution selected from $\mathcal{A}$.

  \item Terminate Algorithm: Determine whether the termination condition is reached. If no, repeat steps 2) -5). If yes, the candidate solutions within $\mathcal{A}$ are the final solutions.
\end{enumerate}

% Subsubsection
% Challenging Issues.
%
\vspace{+0.5 mm}	
\noindent \textbf{Challenging Issues.} MOALO needs to overcome the following issues when solving the formulated problem.

\begin{itemize}
    \item The formulated problem has a large-scale solution space. However, MOALO searches the solution space in a uniformly random manner. In this case, MOALO may fall into local optima.

    \item When solving our MOP, the non-dominated solutions found by MOALO may be not good solutions to our formulated problem. This is because some non-dominated solutions are with extreme trade-offs (\textit{e.g.}, bias toward one objective but overlooking the other two objectives). For instance, we do not need a solution that consumes zero energy but achieve marginal secure performance.

    \item The formulated problem involves integer decision variables. However, MOALO can only deal with continuous decision variables. 
\end{itemize}

\par In the following, we propose several improvements to make MOALO tackle these challenges.

%subsection
%EMOMVO
%
\subsection{Our Proposed MOALO-RSI Method}
\label{ssec:EMOMVO}

\par In this section, we propose an MOALO with random walk-based initialization, sorting-based population evolution, and integer update methods (MOALO-RSI). 

%subsubsection
%Random Walk-based Initialization
%
\vspace{+0.5mm}
\noindent \textbf{Random Walk-based Initialization.} The initial population state of the swarm intelligence algorithm is able to guide the search direction of the following iterations~\cite{Kazimipour2014}. Thus, we can summarize some fundamental principles and add them to the population initialization scheme. 

\begin{lemma}
\label{lemma:energy}
    The space around $\boldsymbol{P}^r$ is more informative since we are more likely to find positions of UAVs with low energy consumption and relatively high secure performance.
\end{lemma}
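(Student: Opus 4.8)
The plan is to establish Lemma~\ref{lemma:energy} by analyzing the two relevant objectives, $f_3$ (energy cost) and the secure-performance objectives $f_1, f_2$, as functions of the displacement $\|\boldsymbol{P} - \boldsymbol{P}^r\|$. First I would handle the energy claim, which is the cleaner of the two. Recall from Eqs.~\eqref{eq:energy-2d}--\eqref{eq:energy-3d} that $E_{i,j}$ is a strictly increasing function of the flight distance from the original position $(x^U_{i,j}, y^U_{i,j}, z^U_{i,j})^r$ to the new DCB position, because $P(v)$ is positive, the kinetic-energy term is nonnegative over a complete maneuver, and the potential-energy term $m_D g (h(T)-h(0))$ grows linearly with altitude change. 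Hence $f_3(\boldsymbol{P}) = \sum_{i}\sum_{j} E_{i,j}$ is minimized (equal to $0$, or the hovering baseline) exactly at $\boldsymbol{P} = \boldsymbol{P}^r$ and increases monotonically as any UAV moves away from its starting point. This shows that, restricted to a small neighborhood of $\boldsymbol{P}^r$, the energy objective stays close to its global optimum, which is the ``low energy consumption'' half of the statement.

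Next I would argue the ``relatively high secure performance'' half. The idea is that the original positions $\boldsymbol{P}^r$ are already a feasible random deployment of the two swarms, so the array factors $AF_i$ in Eq.~\eqref{eq:AF} evaluated at $\boldsymbol{P}^r$ already yield a nontrivial beam pattern with a mainlobe toward the chosen receiver; thus $C_{KE}$ in Eq.~\eqref{eq:cke} and the maximum SLL in Eq.~\eqref{eq: objecitve2} take finite, non-degenerate values at $\boldsymbol{P}^r$. Because $AF_i(\theta,\phi)$ is a smooth (indeed analytic) function of the coordinates, and $G^{A2A}_i$, $\gamma_{i,k}$, $C_{KE}$, and $f_{SLL_i}$ are built from $AF_i$ by continuous operations (products, ratios with nonvanishing denominators, and maxima/minima over the compact angle domain), all three objectives are continuous in $\boldsymbol{P}$. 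Continuity implies that for any target tolerance there is a neighborhood of $\boldsymbol{P}^r$ in which $f_1, f_2$ stay within that tolerance of their values at $\boldsymbol{P}^r$; moreover, by examining the gradient of $|AF_i|^2$ one sees that small perturbations can only improve the pattern if they move elements toward a more favorable phase configuration, so the best solutions found by local search are overwhelmingly concentrated near $\boldsymbol{P}^r$ rather than at the far boundary of the reachable region $\mathbb{R}^{3\times 1}_i$. I would then combine the two halves: near $\boldsymbol{P}^r$ the energy objective is near-optimal and the secrecy objectives are no worse than a baseline that is already reasonable, so this region contains a disproportionate share of the Pareto-efficient (hence ``informative'') solutions, which is precisely the claim.

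The main obstacle is that the secure-performance half is not a clean monotonicity statement the way the energy half is: moving UAVs can in principle both help and hurt $C_{KE}$ and the SLL, and there certainly exist far-away configurations with excellent beam patterns. So the lemma cannot assert that $\boldsymbol{P}^r$ is optimal for secrecy — only that its neighborhood is a high-yield region for the trade-off, i.e. it contains solutions that are simultaneously competitive on all three objectives. I would therefore phrase the argument in terms of Pareto-density rather than pointwise optimality: formally, I would show that any solution far from $\boldsymbol{P}^r$ is dominated on $f_3$ by a comparable number of solutions near $\boldsymbol{P}^r$ with almost-equal $f_1, f_2$ (by the continuity argument), so an initialization biased toward the neighborhood of $\boldsymbol{P}^r$ samples the informative part of the search space more efficiently. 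A secondary technical point to be careful about is the denominator $AF_i(\theta_i,\phi_i)$ in $f_{SLL_i}$ and the integral in Eq.~\eqref{eq:gain}: I must note that these stay bounded away from zero on the relevant neighborhood (which holds because the mainlobe gain is nonzero at a generic deployment), so the continuity claims are legitimate and no singularities are introduced by the perturbation.
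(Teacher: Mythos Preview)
Your energy half is essentially the paper's argument: $f_3$ measures the cost of flying from $\boldsymbol{P}^r$ to $\boldsymbol{P}$, so it is minimized at $\boldsymbol{P}^r$ and grows with displacement.

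On the secure-performance half your route differs from the paper's, and the difference matters. The paper does \emph{not} argue by continuity; it argues by \emph{invariance}: from Eq.~\eqref{eq:AF}, a common translation of all array elements multiplies $AF_i(\theta,\phi)$ by a unit-modulus phase, so $|AF_i|^2$---and hence $G_i^{A2A}$, $C_{KE}$, and $f_{SLL_i}$---depend only on the \emph{relative} positions of the UAVs. Consequently, whatever beam pattern (and therefore whatever $f_1,f_2$) is attainable anywhere in the reachable region is also attainable with the swarm kept near $\boldsymbol{P}^r$, since only the relative geometry needs to be realized. That is the reason the neighborhood of $\boldsymbol{P}^r$ can deliver ``relatively high secure performance,'' not merely performance close to the baseline at $\boldsymbol{P}^r$.

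Your continuity argument does not close the loop. Continuity at $\boldsymbol{P}^r$ tells you only that $f_1,f_2$ stay near their values \emph{at} $\boldsymbol{P}^r$; it gives no control over whether those baseline values are good, and it does not support your Pareto-density step ``any solution far from $\boldsymbol{P}^r$ is dominated on $f_3$ by solutions near $\boldsymbol{P}^r$ with almost-equal $f_1,f_2$.'' To produce, for a far configuration $\boldsymbol{P}'$, a nearby configuration with the same $f_1,f_2$ and smaller $f_3$, you need precisely the relative-positions observation (translate $\boldsymbol{P}'$ so that it sits near $\boldsymbol{P}^r$; $|AF_i|$ is unchanged while displacements shrink). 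Replace the continuity reasoning with this invariance argument and the lemma follows as in the paper.
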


\begin{proof}
   As mentioned in Eq.~\eqref{eq: objecitve3}, the third optimization objective refers to the energy consumption of UAVs for flying from their original positions $\boldsymbol{P}^r$ to the optimized positions. Therefore, the space around $\boldsymbol{P}^r$ will achieve a lower third objective function value. Following this, the first and second objectives (shown in Eqs.~\eqref{eq: objecitve1} and~\eqref{eq: objecitve2}, respectively) are determined by the array factor of UVAAs, which is controlled by the relative positions of UAVs. Thus, even if UAVs fly around $\boldsymbol{P}^r$, they can still attain the optimal relative positions, thereby achieving a relatively high level of secure performance.
\end{proof}

\par Motivated by Lemma \ref{lemma:energy}, we propose a random walk-based initialization method that employs the information of $\boldsymbol{P}^r$ and a random walk approach to design the initial state of $\boldsymbol{P}$ part of each candidate solution $\boldsymbol{X}$. \textit{\uline{First}}, we generate the $N$ vectors drifted from $\boldsymbol{P}^r$ by using random walk, which is given by
\begin{equation}\label{eq:random_walk}
\begin{aligned}
	&\boldsymbol{P}^*= [ \boldsymbol{P}^*_1, \boldsymbol{P}^*_2,...,\boldsymbol{P}^*_N ] \\&=\left[0, f_c\left(2 r\left(1\right)-1\right), f_c \left(2 r\left(2\right)-1\right), \ldots, f_c\left(2 r\left(N\right)-1\right)\right],
\end{aligned}
\end{equation}

\noindent where $f_c$ calculates the cumulative sum, and $r(n)$ is a stochastic function which is given by 
\begin{equation}
	\label{eq:sto}
	r(n)=\left\{\begin{array}{ll}
1 & \text { if } rand >0.5 \\
0 & \text { if } rand \leqslant 0.5
\end{array}\right.,
\end{equation}
\noindent where $rand$ is a random number generated with uniform distribution in the interval of $[0, 1]$. 

\par \uline{\textit{Second}}, we use these vectors and random number generator to initialize the $n$th candidate solution of the population $\boldsymbol{X}_{n}^{init}$ as follows:
\begin{equation}\label{eq:generate}
\begin{aligned}
	&\boldsymbol{X}_{n}^{init}= [ \boldsymbol{P}^*_n, \boldsymbol{\Omega}_{rand}, \boldsymbol{u}_{rand}],
\end{aligned}
\end{equation}
\noindent where $\boldsymbol{\Omega}_{rand}$ is a random vector, an element of which is a random number generated with uniform distribution in the interval of $[0, 1]$. Moreover, $\boldsymbol{u}_{rand}$ is a random integer vector with two elements, and each element can be calculated as $round\left(rand*N_{U}\right)$.

\par Using this method, we can generate more informative initial candidate solutions, thereby facilitating the following iterative improvements of the MOALO algorithm.

%subsubsection
%Random Walk-based Initialization
%
\vspace{+0.5mm}
\noindent \textbf{Sorting-based Population Evolution.} In this subsection, we propose a novel scheme to improve the evolution of the population. We begin by proving a relationship between the archive and population evolution. 

\begin{lemma}
\label{lemma:trade-off}
    The distribution of solutions in the archive determines the search direction of the algorithm. In this case, MOALO may waste computational resources on excessively biased trade-offs, where one objective is prioritized while disregarding the other two objectives.
\end{lemma}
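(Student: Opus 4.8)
The plan is to trace how the archive $\mathcal{A}$ is fed back into the next population through the update rule in Eq.~\eqref{eq:solution-update}, and then to argue that a persistent cluster of biased trade-off solutions in $\mathcal{A}$ forces a bounded-below fraction of the algorithm's candidate updates toward that cluster. \emph{First}, I would make precise the dependence of a freshly updated candidate $\boldsymbol{X}_n$ on $\mathcal{A}$: in Eq.~\eqref{eq:solution-update} the term $\boldsymbol{X}^A$ is drawn from $\mathcal{A}$ by roulette-wheel selection, while the guide $\boldsymbol{X}^R$ is itself constructed by the ant-lion heuristics of~\cite{Mirjalili2017b} around an elite (an ``antlion'') that is again selected out of $\mathcal{A}$. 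Hence both summands are functions of $\mathcal{A}$ alone, up to bounded-range random-walk noise, so $\boldsymbol{X}_n$ stays in a bounded neighborhood of the midpoint of two archive members. Taking expectations over the selection randomness, the centroid of the updated population equals a convex combination $\sum_m p_m \boldsymbol{X}^A_m$ of archive points, with weights $p_m$ given by the roulette probabilities; the population is therefore confined to a bounded dilation of the convex hull of $\mathcal{A}$ and is pulled toward whichever sub-region of $\mathcal{A}$ carries the most selection mass. This establishes the first sentence of the lemma.

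\emph{Second}, I would show that the excessively biased solutions are precisely the ones that are never discarded. A solution that makes $f_3$ in Eq.~\eqref{eq: objecitve3} nearly vanish, i.e., the UAVs barely leave $\boldsymbol{P}^r$, while leaving $f_1$ and $f_2$ poor, is non-dominated as soon as no archived solution beats it on $f_3$; in objective space it is an extreme point of the image of $\mathcal{A}$, so the crowding-based pruning, which deletes densely surrounded interior solutions rather than boundary ones, never removes it, and it retains a strictly positive roulette weight in every subsequent iteration. Summing over the at-least-constantly-many such extreme trade-off points, generically one or more per biased objective, a fraction of the $N$ per-iteration updates that is bounded away from zero is steered into their neighborhoods; by step~2) of MOALO these neighborhoods are then evaluated, and being non-dominated in their extreme coordinate they are re-inserted into $\mathcal{A}$, closing a self-reinforcing loop. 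Since, by the discussion preceding the lemma, a near-zero-energy but marginal-secrecy point, or its $f_2$-biased analogue, is not a useful operating point for the system, this share of the evaluation budget is wasted, which is the second sentence.

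The main obstacle I anticipate is not a single calculation but choosing the right level of formality: the claim concerns algorithmic dynamics, so I would have to fix a clean surrogate for ``search direction'' (the population centroid, or equivalently the support of the next-iteration sampling distribution) and a clean surrogate for ``wasted resources'' (the fraction of function evaluations, or of iterations, allocated to neighborhoods of extreme-trade-off archive members). Once those are fixed, the argument reduces to the two structural facts above, namely the near-convexity of the update map and the persistence of boundary solutions under crowding. A secondary subtlety worth flagging in the write-up is that MOALO's roulette weights are deliberately biased toward less-crowded solutions, so the crowding mechanism \emph{amplifies} rather than damps the pull toward isolated extreme points; I would include this remark to pre-empt the objection that crowding already resolves the issue.
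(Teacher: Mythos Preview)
Your proposal is correct and follows the same logical skeleton as the paper: the archive feeds the update rule of Step~4), the crowding mechanism of Step~3) retains extreme/diverse trade-offs, and those retained solutions then misguide subsequent search. The paper's own proof, however, is only three informal sentences stating exactly these points without any of your surrogate definitions, convex-hull reasoning, or boundary-persistence argument; your write-up is substantially more formal than what the authors provide, and your remark that the roulette weighting toward less-crowded solutions \emph{amplifies} the pull to extreme points is an observation the paper does not make explicit.
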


\begin{proof}
    In Step 4) of MOALO, the solutions in the current population are derived from both the solutions in the archive and the solutions from the previous population. Due to the crowding mechanism outlined in Step 3), MOALO tends to aim for a diverse range of trade-offs between objectives. Consequently, the archive may contain solutions that exhibit highly biased trade-offs. These biased solutions can misguide the search process of MOALO in subsequent iterations.
\end{proof}

\par Considering Lemma~\ref{lemma:trade-off}, we aim to eliminate less-efficient trade-offs by filtering candidate solutions of the archive. Our sorting-based population evolution method is as follows. 

\par \uline{\textit{First}}, in each iteration, we rank the populations separately according to the three optimization objectives. \uline{\textit{Second}}, we record the minimum value of the first and second optimization objectives and the maximum value of the third optimization objective, which are denoted as $f_{1_{min}}^{t}$, $f_{2_{min}}^{t}$, and $f_{3_{max}}^{t}$, respectively. \uline{\textit{Finally}}, we remove the candidate solutions below the set three thresholds. The three thresholds $\zeta_1^t$, $\zeta_2^t$, and $\zeta_3^t$ are given by
\begin{equation}\label{eq:zeta}
	\begin{aligned}
		\zeta_1 = f_{1_{min}}^{t} \times {\delta_1}, \quad
		\zeta_2 = f_{2_{min}}^{t} \times {\delta_2}, \quad
		\zeta_3 = f_{3_{max}}^{t} \times {\delta_3},
	\end{aligned}
\end{equation}
\noindent where $\delta_1$, $\delta_2$, and $\delta_3$ are three parameters which are ranged from 0 to 1. If we require high coverage for a specific objective, we can set the corresponding parameters relatively larger. Following this, the main steps of the sorting-based population evolution method are shown in Algorithm~\ref{algo:Sorting}. 
% By using this method, we can control the search direction of MOALO for boosting the optimization of objectives 1 and 2 while ensuring a reasonable increase of objective 3. As can be seen, we eliminate the meaningless candidate solutions with low values of the first and second objectives and relax the third objective in a low step through the iterations.

\begin{algorithm}[tb]
  \normalem
  \caption{Sorting-based Population Evolution}\label{algo:Sorting}
  \KwIn{$\mathcal{A}$, $\mathcal{P}$, $N$, current iteration $t$.}
  \KwOut{$\mathcal{A}$}
  $\mathcal{A} \leftarrow \{\mathcal{A}, \mathcal{P}\}$; (Denote the size of $\mathcal{A}$ as $N_A$ and the objectives of $a$th solution in $\mathcal{A}$ as [$f_{1,a}$,$f_{2,a}$,$f_{3,a}$])\\
  Rank $\mathcal{A}$ according to the first, second, and third objective values and record $f_{1_{min}}^{t}$, $f_{2_{min}}^{t}$, and $f_{3_{max}}^{t}$, respectively;\\

  Calculate $\zeta_1$, $\zeta_2$, and $\zeta_3$ by using Eq.~\eqref{eq:zeta};\\

  The dominated solutions are removed from $\mathcal{A}$;\\

  \For{$a=1$ to $N_A$}
  {
    \If {$f_{1,a}> \zeta_1$ and $mod(t,3)=0$ }
    {
    	The $a$th solution is removed from $\mathcal{A}$;\\
    }

    \If {$f_{2,a}> \zeta_2$ and $mod(t,3)=1$ }
    {
      The $a$th solution is removed from $\mathcal{A}$;\\
    }

    \If {$f_{3,a}> \zeta_3$ and $mod(t,3)=2$ }
    {
      The $a$th solution is removed from $\mathcal{A}$;\\
    }

  }
  Return $\mathcal{A}$;\\
\end{algorithm}

%ALGO
%INSGA-III
%
\begin{algorithm}[b]
  \normalem
  \caption{MOALO-RSI}\label{algo:MOALO-RSI}
  \KwIn{$\mathcal{P}$, $\mathcal{A}$, $N$, $t_{max}$ (maximum iteration).}
  \KwOut{$\mathcal{A}$}

  $\mathcal{P} \leftarrow \varnothing$, $\mathcal{A} \leftarrow \varnothing$;\\

  \For{$n=1$ to $N$}
  {
    Generate the solution $\boldsymbol{X}_{n}^{init}$ by using Eq. \eqref{eq:generate}; \\
    $\mathcal{P} \leftarrow \mathcal{P} \cup \left\{\boldsymbol{X}_{n}^{init}\right\}$;\\
  }
  \While{$t<t_{max}$}
  {
  	\For{$n=1$ to $N$}
	  {
	    Evaluate the objective values of $\boldsymbol{X}_{n}$ via Eqs.~\eqref{eq:AF}-\eqref{eq:formulation}, which are denoted as $F_{n}=[f_{1,n}, f_{2,n}, f_{3,n}]$;\\
	  }

	  Update $\mathcal{A}$ by using \textit{\textbf{Algorithm}} \ref{algo:Sorting};\\

  	  \For{$n=1$ to $N$}
	  {
	    Update $\boldsymbol{X}_{n}$ by using Eqs.~\eqref{eq:solution-update} and~\eqref{eq:integer-update};\\
	  }
    $t=t+1$;\\
  }
  Return $\mathcal{A}$;\\
\end{algorithm}

%subsubsection
%Random Walk-based Initialization
%
\vspace{+0.5mm}
\noindent \textbf{Integer Update Method.} Another critical issue of MOALO for solving the formulated optimization problem lies in dealing with the integer decision variables. Specifically, the update method mentioned in Eq. \eqref{eq:solution-update} can update only the candidate solution with continuous decision variables. However, the formulated optimization problem involves integer decision variables which is a challenging task for MOALO. Thus, we propose an integer update method in the following. 

\par Similar to Eq.~\eqref{eq:solution-update}, we employ the selected candidate solutions of the archive, \text{i.e.}, $\boldsymbol{X}^A$, to guide the update. Moreover, we also introduce the integer decision variables of the original candidate solution $\boldsymbol{X}_n$ and random integer generator to preserve inertia and increase randomness, respectively. Let $\boldsymbol{u}_{\mathcal{A}}= \boldsymbol{X}^A (\boldsymbol{u})$, $\boldsymbol{u}_{o}=\boldsymbol{X}_n (\boldsymbol{u})$, and $\boldsymbol{u}_{rand}=[randi(N_U), randi(N_U)]$ denote the integer decision variables of selected archive solution and original solution, and randomly generated integer, respectively, in which $randi(N)$ is a function that generates a random integer no more than $N$. Then, the integer decision variables are updated as follows:
\begin{equation}
	\label{eq:integer-update}
	\boldsymbol{u}=\left\{\begin{array}{ll}
\boldsymbol{u}_{\mathcal{A}}, & rand<\frac{1}{3} \\
\boldsymbol{u}_{o}, & \frac{2}{3}>rand>\frac{1}{3} \\
\boldsymbol{u}_{rand}, & \text{otherwise} \\
\end{array}\right..
\end{equation}

\noindent As can be seen, the updated integer decision variables are guided by elite, inertia, and randomness mechanisms~\cite{Mirjalili2017b}, thereby achieving a more balanced search of the integer solution space.

%subsubsection
%Main steps and computational complexity
%
\vspace{+0.5mm}
\noindent \textbf{Main Steps and Computational Complexity.} By introducing the aforementioned random walk-based initialization, sorting-based population evolution, and integer update methods, we summarize a novel MOALO-RSI to solve the formulated optimization problem. Let $N$ and $t_{max}$ be the population size and maximum iteration of the algorithm, the main steps of MOALO-RSI are shown in Algorithm~\ref{algo:MOALO-RSI}. Note that $t_{max}$ can be determined by historical experience. Besides, in computational resource-sensitive scenarios, it is also possible to establish thresholds for the three objectives. In such cases, iterations can be terminated promptly once a solution surpassing the predefined threshold is obtained. The computation complexity of our MOALO-RSI is analyzed as follows. 

%FIGURE
%path
%
\begin{figure*}
    \centering
    \subfloat[Antenna gain of UVAA1]{
       \includegraphics[width=0.22\linewidth]{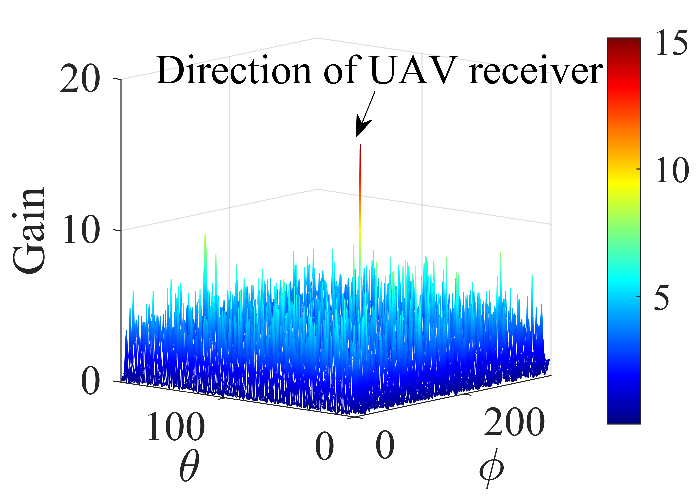}\label{fig:vr-gain1}}
    % \hfill
    \subfloat[Antenna gain of UVAA2]{
       \includegraphics[width=0.22\linewidth]{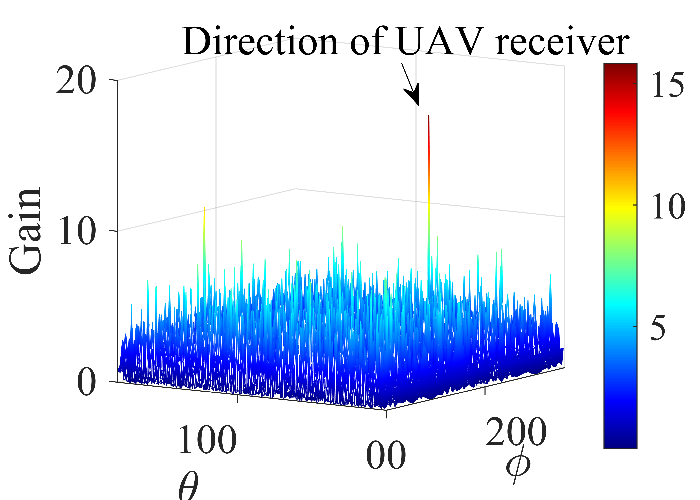}\label{fig:vr-gain2}}
    % \hfill
     \subfloat[Position changes of UVAA1 and UVAA2]{
       \includegraphics[width=0.52\linewidth]{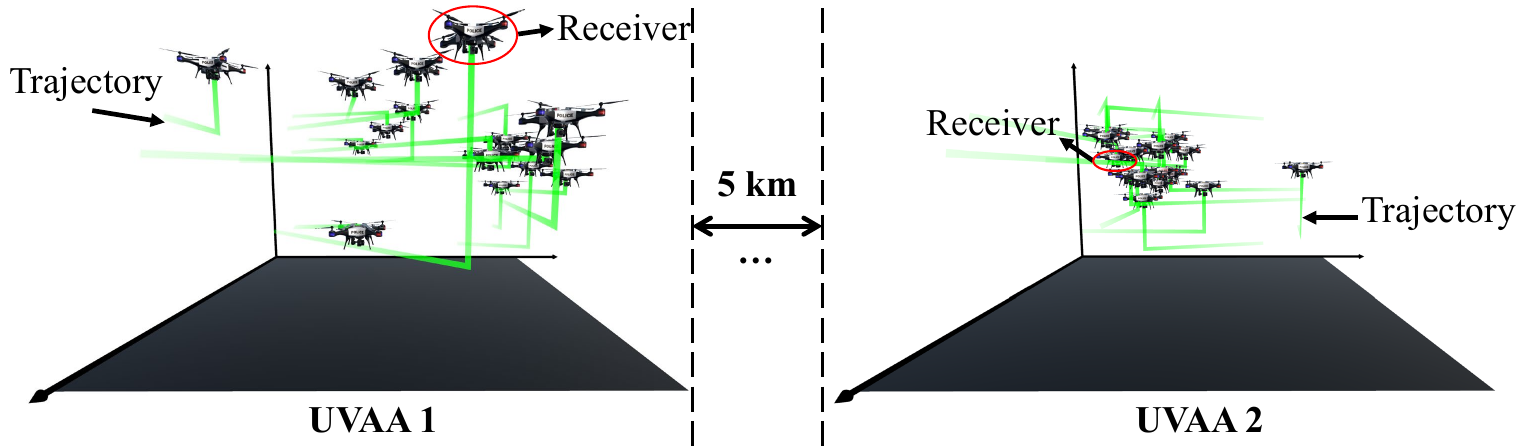}\label{fig:vr-path1}} 
     % \hfill
  \caption{Visualization results obtained by our proposed MOALO-RSI. }
  \label{fig:Visualization-results}
\end{figure*}

\vspace{+0.5 mm}
\begin{lemma}
	\textit{The complexity of MOALO-RSI is $\mathcal{O}(N_o N^2)$.}
\end{lemma}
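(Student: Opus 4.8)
The plan is to establish the claimed complexity $\mathcal{O}(N_o N^2)$ by accounting for the cost of each phase of MOALO-RSI separately (initialization, objective evaluation, archive update, and solution update), and showing that the dominant term is the archive update, which incurs pairwise comparisons over a set whose size is $\mathcal{O}(N)$. Here $N$ is the population size and $N_o$ is the number of objectives (here $N_o = 3$, but keeping it symbolic clarifies the bottleneck). First I would observe that the random walk-based initialization in Eq.~\eqref{eq:generate} runs once, producing $N$ solutions, each requiring $\mathcal{O}(N_U)$ work for the position, weight, and receiver components; since this is not inside the main loop, it contributes only an additive lower-order term.

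Next I would analyze one iteration of the \texttt{while} loop in Algorithm~\ref{algo:MOALO-RSI}. The objective-evaluation step computes $F_n = [f_{1,n}, f_{2,n}, f_{3,n}]$ for each of the $N$ solutions; each objective evaluation is treated as $\mathcal{O}(1)$ (or absorbed into a constant, as is standard in swarm-intelligence complexity analyses), so this step costs $\mathcal{O}(N_o N)$. The solution-update step (Eqs.~\eqref{eq:solution-update} and~\eqref{eq:integer-update}) likewise touches each of the $N$ solutions with constant-per-component work, contributing $\mathcal{O}(N_o N)$ at most. The key step is the archive update via Algorithm~\ref{algo:Sorting}: the combined archive $\mathcal{A} \leftarrow \{\mathcal{A}, \mathcal{P}\}$ has size $N_A = \mathcal{O}(N)$ because the crowding mechanism caps the archive at a size proportional to $N$; removing dominated solutions requires comparing every pair of solutions across all $N_o$ objectives, costing $\mathcal{O}(N_o N_A^2) = \mathcal{O}(N_o N^2)$, while the sorting by each objective adds only $\mathcal{O}(N_o N \log N)$ and the threshold-filtering loop adds $\mathcal{O}(N_o N)$. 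Hence one iteration costs $\mathcal{O}(N_o N^2)$, dominated by the non-domination check.

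Finally I would note that the \texttt{while} loop runs $t_{max}$ times, giving a total of $\mathcal{O}(t_{max} N_o N^2)$; treating $t_{max}$ as a fixed constant determined by historical experience (as stated in the text), the per-run complexity is $\mathcal{O}(N_o N^2)$, which is the claim. The main obstacle — and the point that needs to be argued carefully rather than asserted — is the bound $N_A = \mathcal{O}(N)$ on the archive size: this relies on the crowding-distance removal mechanism of MOALO being invoked whenever the archive exceeds its maximum capacity, so that $N_A$ never grows beyond a constant multiple of $N$ despite the repeated merging $\mathcal{A} \leftarrow \{\mathcal{A}, \mathcal{P}\}$. Once that invariant is in place, the quadratic cost of the pairwise Pareto-dominance comparisons in Algorithm~\ref{algo:Sorting} is clearly the dominant term, since every other operation in the loop is at most $\mathcal{O}(N_o N \log N)$, and the result follows.
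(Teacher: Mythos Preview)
Your proposal is correct and follows essentially the same approach as the paper: both identify the per-iteration objective evaluation as $\mathcal{O}(N_o N)$, rely on the archive being bounded in size by $N$ (via the crowding mechanism), and conclude that the non-dominated/Pareto comparison step over the archive is the dominant $\mathcal{O}(N_o N^2)$ term. Your write-up is in fact more careful than the paper's own proof---you explicitly account for initialization, the solution-update loop, the sorting and threshold-filtering steps inside Algorithm~\ref{algo:Sorting}, and the role of $t_{max}$---but the skeleton of the argument is identical.
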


\vspace{+0.5 mm}
\begin{proof}
	The computations of the objective functions and the crowding mechanism determine the computational complexity of the multi-objective optimization algorithms. If the number of optimization objectives is $N_o$, then the computational complexity for calculating the objective functions and crowding mechanism computation are $\mathcal{O}(N_o N)$ and $\mathcal{O}(N_o N_{Arc} \log N_{Arc})$, respectively. In most cases, the size of the Pareto archive is the same as the population $N$, which means that the computational complexity for the non-dominated sorting is $\mathcal{O}(N_o N^2)$. Therefore, the complexities of the MOALO and MOALO-RSI are both $\mathcal{O}(N_o N^2)$. 
\end{proof}

\par As can be seen, the computational complexity of the proposed MOALO-RSI does not increase after being improved.

%subsubsection
%Deployment Scheme
%
\vspace{+0.5mm}

\noindent \textbf{Deployment Scheme.} MOALO-RSI can be easily extended to a parallel distributed version by using synchronism homogeneity island model~\cite{Li2023} (which will be considered in our future work). Thus, we can execute MOALO-RSI on an accessible centralized high-performance computing device (\textit{e.g.}, mobile terrestrial workstations) or run its parallel distributed version by using the computation device within the UAV swarm. In this case, the communications between UAV swarms can be done by non-optimized robust DCB, while the communications within the same swarm can be accomplished by the inner-swarm communication protocol~\cite{Feng2013}. 
% \noindent \textbf{Deployment Scheme.} When our proposed MOALO-RSI is deployed in practice, it necessitates a scheme for acquiring prior information and sharing results. We discuss a viable scheme as follows. 

% \begin{enumerate}
%     \item UAV swarms use non-optimized robust DCB to share crucial details such as their positions, locations of identified eavesdroppers, and other prior information. Note that non-optimized DCB enables long-range transmission but lacks secure measures. Therefore, the information is encrypted to ensure confidentiality.

%     \item Running our MOALO-RSI to get the optimization results. Note that MOALO-RSI can be easily extended to a parallel distributed version by using synchronism homogeneity island model~\cite{Li2023} (which will be considered in our future work). Thus, we can execute MOALO-RSI on an accessible centralized high-performance computing device (\textit{e.g.}, mobile terrestrial workstations) or run its parallel distributed version by using the computation device within the UAV swarm. In the latter case, UAVs can use the intra-swarm communication protocol~\cite{Feng2013} to achieve information exchange.

%     \item The UAV swarm which runs MOALO-RSI sends the optimization results and schedules to another by using non-optimized robust DCB. Then, two UAV swarms will perform the DCB-based secure communications according to the optimization results of MOALO-RSI.
% \end{enumerate}

%
%Simulation results and analysis
%
\section{Simulation Results} % (fold)
\label{sec:simulation_results_and_analysis}

\par In this section, we provide important simulation results to evaluate the effectiveness of the proposed method.

\begin{table}
	\centering
	\caption{Numeral results obtained by baselines and our MOALO-RSI.}
	\label{tab:result-plos-small}
	\begin{tabular}{lllll}
		\toprule
		\textbf{Method} 	& \bf{$f_{1}$ [bps] }          & \bf{$f_{2}$ [dB] } & \bf{$f_{3}$ [J] } \\ \midrule
    LAA-Swarm               & $7.59 \times 10^{5}$    & $-0.21$          & $8.23 \times 10^{4}$  \\
    MOGOA~\cite{Mirjalili2018}                   & $1.90 \times 10^{6}$    & $-0.92$          & $1.04 \times 10^{5}$    \\
    MOMVO~\cite{Mirjalili2017}                    & $2.00 \times 10^{6}$    & $-1.73$          & $1.11 \times 10^{5}$   \\
    MSSA~\cite{Mirjalili2017a}                     & $1.93 \times 10^{6}$    & $-0.48$          & $8.08 \times 10^{4}$   \\
    MODA~\cite{Mirjalili2016}                     & $1.98 \times 10^{6}$    & $-2.32$          & $9.41 \times 10^{4}$  \\
    MOALO~\cite{Mirjalili2017b}       & $2.01 \times 10^{6}$    & $-1.63$          & $1.04 \times 10^{5}$   \\
    Our MOALO-RSI               & \bm{$2.14 \times 10^{6}$}    & \bm{$-2.48$}      & \bm{$7.14 \times 10^{4}$} \\\bottomrule
	\end{tabular}
\end{table}

%subsection
%Simulation Setups
%
\vspace{+0.5mm}
\noindent \textbf{Simulation Setups.} In the simulations, the numbers of UAVs of two UAV swarms are set as 16, and the UAV heights vary from 70 m to 120 m. Moreover, these two UAV swarms are separately distributed in two 100 m $\times$ 100 m areas, and the distance between these two areas is about 5 km. The collision distance $d_{min}$ between two arbitrary UAVs is set as 0.5 m. Other parameters related to communications and the UAV energy model follow~\cite{Mozaffari2019} and~\cite{Li2023TMC}, respectively. For comparison, we adopt the following baselines:  

\begin{itemize}
    \item \textit{LAA-Swarm:} Two UAV swarms separately form two linear antenna arrays and randomly select a receiver from a different UAV swarm. 

    \item \textit{State-of-the-art Baseline Algorithms:} We consider state-of-the-art multi-objective swarm intelligence algorithms, including multi-objective grasshopper optimization algorithm (MOGOA)~\cite{Mirjalili2018}, multi-objective multi-verse optimizer (MOMVO)~\cite{Mirjalili2017}, multi-objective salp swarm algorithm (MSSA)~\cite{Mirjalili2017a}, multi-objective dragonfly algorithm (MODA)~\cite{Mirjalili2016}, and MOALO~\cite{Mirjalili2017b}. Note that these algorithms employ the proposed integer update method to handle the integer decision variables of the formulated problem. Their population size and maximum iteration are 50 and 300, respectively, while key parameters follow the initial settings of their source papers. 
\end{itemize}

% Figure
% PS
%
\begin{figure}
  \centering
  \includegraphics[width=3 in]{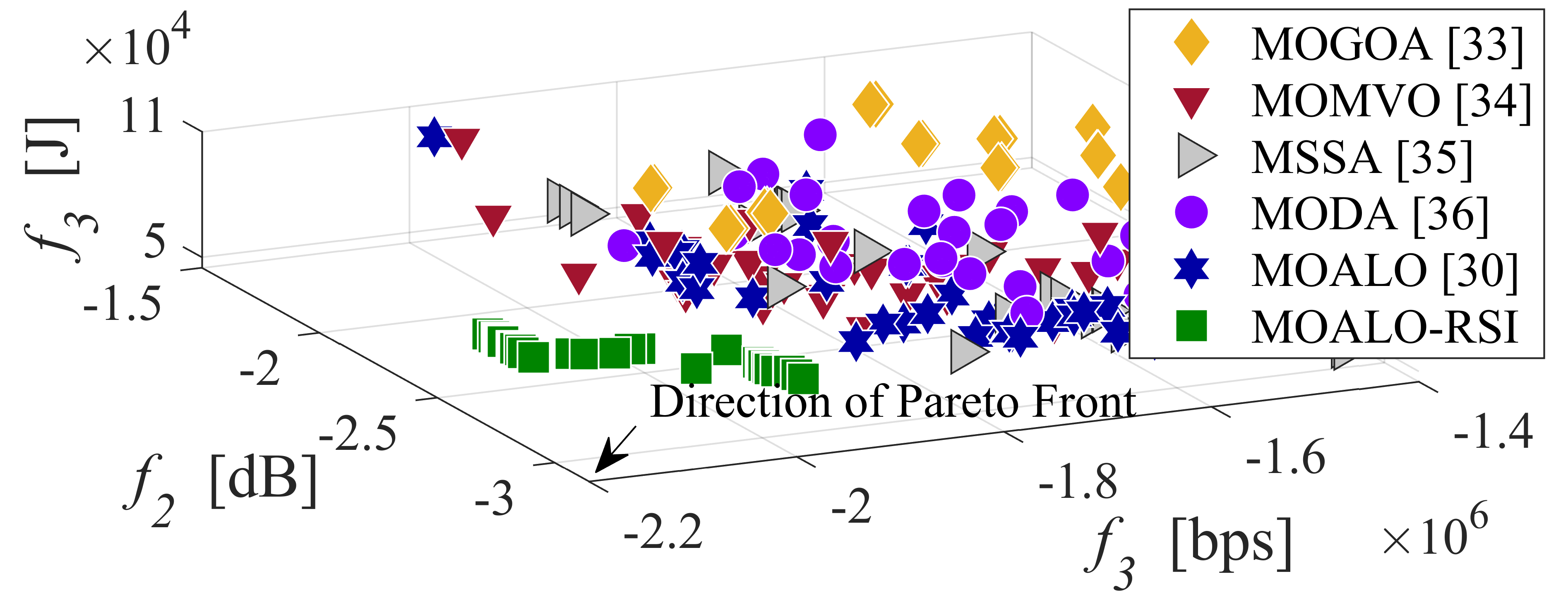}
  \caption{Pareto solutions obtained by benchmarks and the proposed MOALO-RSI. All the nodes denote the objective values of the candidate solutions obtained by different algorithms.}
  \label{fig:PS}
\end{figure}

%FIGURE
%Robustness verification results
%
\begin{figure*}
    \centering
    \subfloat[Phase synchronization error]{
       \includegraphics[width=0.22\linewidth]{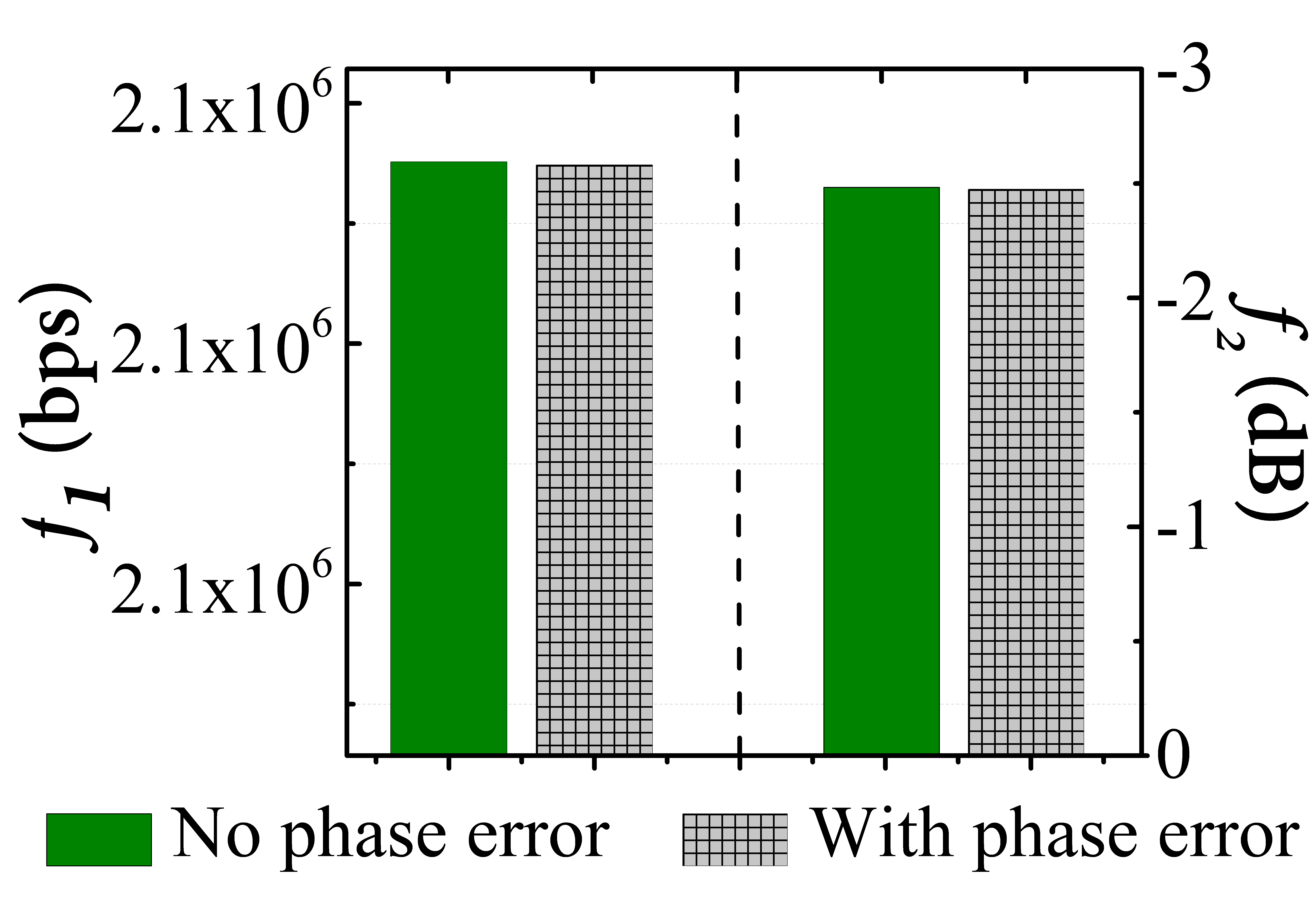}\label{fig:rv-phase}}
    \subfloat[CSI errors]{
       \includegraphics[width=0.37\linewidth]{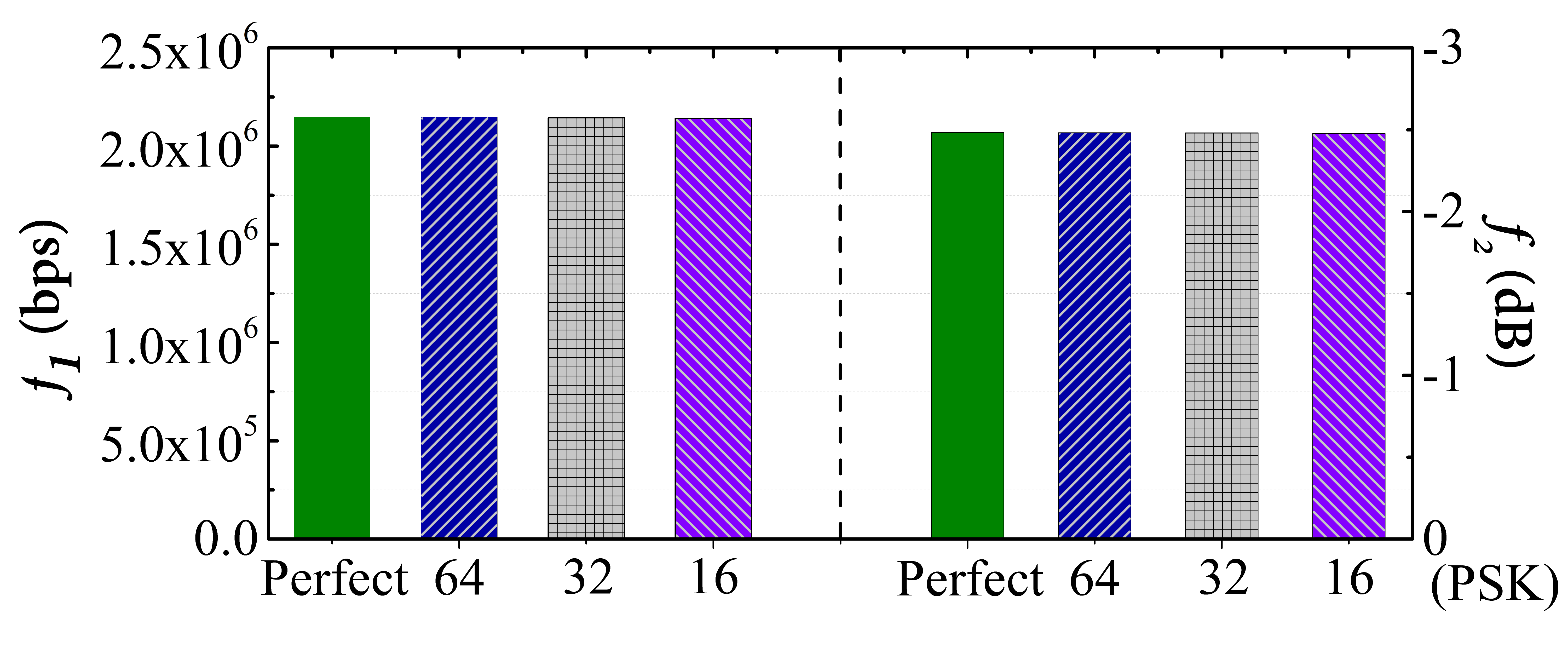}\label{fig:rv-CSI}} 
     \subfloat[UAV jitter]{
       \includegraphics[width=0.37\linewidth]{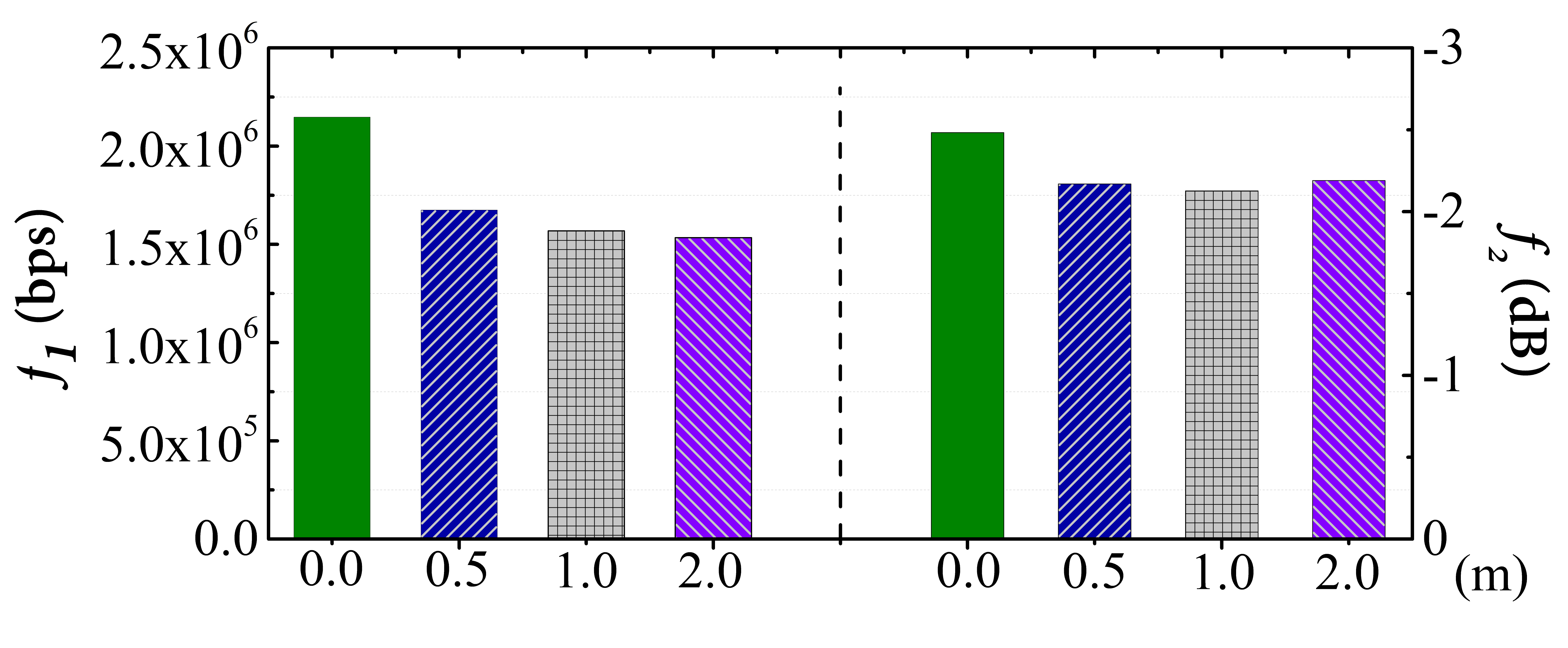}\label{fig:rv-position}}
       \\
  \caption{Robustness verification results (in terms of $f_1$ and $f_2$) under some special cases.}
  \label{fig:Robustness-results}
\end{figure*}

%subsection
%Visualization Results.
%
\vspace{+0.5mm}
\noindent \textbf{Visualization Results.} For ease of presentation, we employ Unity 3D and Matlab to visualize the results and demonstrate the effectiveness of the solution obtained by our MOALO-RSI. First, we present the antenna gains of the two UVAAs in Figs.~\ref{fig:Visualization-results}\subref{fig:vr-gain1} and~\ref{fig:Visualization-results}\subref{fig:vr-gain2}. It is evident that, except for the target directions, the antenna gains of both UVAAs are relatively low. This indicates that two-way aerial communications can achieve high secrecy capacities. Second, in Fig.~\ref{fig:Visualization-results}\subref{fig:vr-path1}, we illustrate the trajectories of the UAVs in the two UAV swarms during constructing UVAAs. Notably, the UAVs exhibit minimal position changes, resulting in energy savings. Thus, these visualization results are strong evidence that our solution achieves excellent security performance and energy efficiency.

% As can be seen, our method does not degrade significantly in these cases, which shows a certain degree of robustness.

% Figure
% flight_control
%
\begin{figure}
  \centering
  \includegraphics[width=3 in]{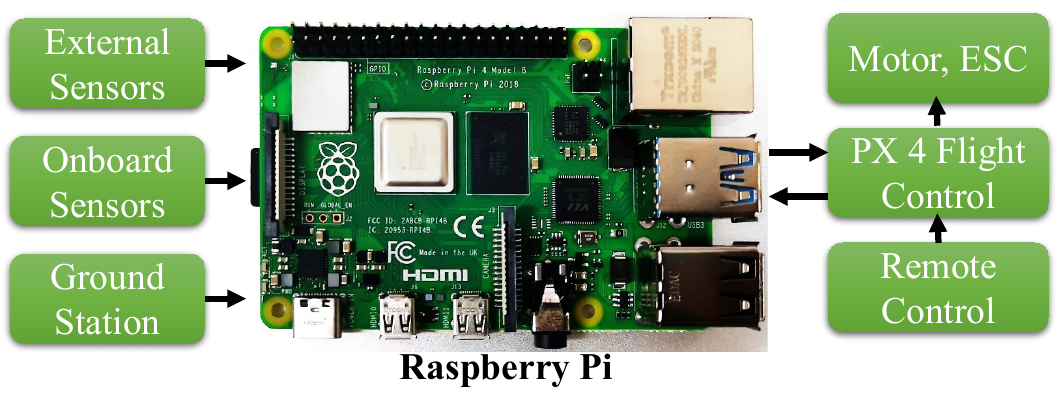}
  \caption{An example of autonomous UAV systems based on Raspberry Pi 4B. }
  \label{fig:flight_control}
\end{figure}

%subsection
%Comparisons and Analyses
%
\vspace{+0.5mm}
\noindent \textbf{Comparisons and Analyses.} In this part, MOALO-RSI is compared with other baseline methods mentioned in simulation setups in solving our optimization problem. Table \ref{tab:result-plos-small} shows the numeral results from our MOALO-RSI and the comparison benchmarks in terms of the three objectives as given in Eqs.~\eqref{eq: objecitve1},~\eqref{eq: objecitve2} and~\eqref{eq: objecitve3} (\textit{i.e.}, $f_1$, $f_2$, and $f_3$). For ease of analysis, we assume $f_1$ is the most crucial objective of this work, and select the solution with the best $f_1$ value from the Pareto solution sets as the final solution by using the automatic method in~\cite{Ferreira2007}. As can be seen, MOALO-RSI is superior to the LAA-Swarm method which is most likely to be employed in practice, indicating that the considered optimization approach is non-trivial. Moreover, MOALO-RSI outmatches various multi-objective swarm intelligence algorithms, implying that it is more suitable for solving the formulated problem. Thus, the formulation and the proposed enhanced measures of the MOALO-RSI are valid and effective, which provides an insight that using the problem's physical properties (\textit{e.g.}, the properties summarized in Lemmas~\ref{lemma:energy} and~\ref{lemma:trade-off}) to enhance the swarm intelligence algorithm.

\par Then, Fig. \ref{fig:PS} shows the Pareto solutions (\textit{i.e.}, candidate solutions in archive) procured by the proposed MOALO-RSI and other baseline algorithms. As can be seen, the candidate solutions obtained by MOALO-RSI are approaching the ideal Pareto front direction, and mostly dominate the candidate solutions obtained by other baseline algorithms. Moreover, our MOALO-RSI avoids obtaining excessively biased trade-offs (\textit{e.g.}, the trade-off with -0.02 dB SLL obtained by MOALO). The reason is that the proposed sorting-based population evolution method remote excessively biased trade-offs in each iteration, thereby facilitating reasonable trade-off distributions.

%subsection
%Robustness Verification.
%
\vspace{+0.5mm}
\noindent \textbf{Robustness Verification.} In this part, we evaluate the robustness of our proposed method. \uline{\textit{First}}, we simulate the phase synchronization error mentioned in~\cite{Ahmad2022}, which follows a Gaussian distribution with zero-mean and variance $\zeta^{2}=\omega_{c}^{2} q_{1}^{2} \Delta T+\omega_{c}^{2} q_{2}^{2} \Delta T^{3}/3$ (these parameters follow~\cite{Ahmad2022}). Fig.~\ref{fig:Robustness-results}\subref{fig:rv-phase} demonstrates that our solution exhibits negligible performance loss under phase synchronization errors. \uline{\textit{Second}}, we introduce varying degrees of CSI errors which are induced when using different length CSI codebooks from~\cite{Ahmad2022}, including errors of 16-PSK, 32-PSK, and 64-PSK codebooks. Note that a longer codebook tends to yield smaller CSI errors. As depicted in Fig.~\ref{fig:Robustness-results}\subref{fig:rv-CSI}, the performance loss in terms of $f_1$ and $f_2$ is generally insignificant across most scenarios, particularly when employing codebooks longer than 32-PSK. \uline{\textit{Finally}}, we examine four UAV jitter conditions, where the maximum drifts are set to 0.5 m, 1 m, and 2 m, respectively. As observed from Fig.~\ref{fig:Robustness-results}\subref{fig:rv-position}, the performance gaps between the non-drift and position-drifted cases are minimal for $f_2$. However, for $f_1$, there is a slight degradation when drifts are present, although it remains acceptable for relatively small drifts. Overall, our proposed method demonstrates a certain degree of robustness.
% under various special cases, including phase synchronization error, CSI error, and UAV jitter.

%subsection
%Practicality Analysis.
%
\vspace{+0.5mm}
\noindent \textbf{Practicality Analysis.} We assess the practicality of our method and encryption/decryption techniques. As shown in Fig.~\ref{fig:flight_control}, we utilize the Raspberry Pi 4B as the UAVs' flight control system, as it is a common setup in widely used UAV platforms (\textit{e.g.}, PX4 autopilot) and previous studies (\textit{e.g.},~\cite{Zhou2021}). As discussed in Section \ref{sec:algorithm}, we assume that one UAV swarm runs a parallel distributed version of our MOALO-RSI, and omit the step for calculating optimization objective values as it is often substituted with proxy models~\cite{Jeong2005} in real-world scenarios. Additionally, three common encryption/decryption methods, namely, data encryption standard (DES), advanced encryption standard (AES), and Rivest-Shamir-Adleman (RSA), are introduced for further comparisons~\cite{Bhanot2015}.

Experimental results demonstrate that a one-time calculation of our MOALO-RSI can be completed within 40.18 s. Furthermore, the calculation times for encrypting and decrypting 200 MB of data using DES, AES, and RSA are 12.07 s, 9.29 s, and 1567.59 s, respectively. As such, when the data volume exceeds approximately 1 GB, our proposed method achieves obvious advantages in computing time. This is because the encryption/decryption techniques need to continuously calculate over time while our method only needs one-time calculation. Therefore, the advantages of our method become more apparent as the amount of transferred data increases.
% Overall, our method can be deployed in low computing power platforms of UAVs and is beneficial to saving computational resources, especially when the transmitted data is large-scale.

%
%Conclusion
%
\section{Conclusion} % (fold)
\label{sec:conclusion}

\par This paper investigated a DCB-enabled aerial two-way communication of two UAV swarms under eavesdropper collusion. To achieve secure and energy-efficient communications, we formulated an MOP to improve the minimum two-way known secrecy capacity, minimize the maximum SLL, and restrict the energy consumption of the UAVs simultaneously. Due to the NP-hardness of the MOP, we proposed an enhanced multi-objective swarm intelligence method, \textit{i.e.}, MOALO-RSI, to solve the problem. Simulation results demonstrated that MOALO-RSI outmatched MOALO and other baseline algorithms and was robust. Finally, experimental results showed that MOALO-RSI can run on limited computing power platforms and can save computational resources.
% section conclusion (end)
% Specifically, we considered the worst wiretap case that the eavesdroppers adopt MRC based on signal detection. Moreover, our proposed method is with the robustness to mitigate the impacts of phase synchronization error, CSI error, and UAV jitter. 

\section*{Acknowledgement}
\par This study is supported in part by the National Natural Science Foundation of China (62172186, 62002133, 61872158, 62272194), and in part by the Science and Technology Development Plan Project of Jilin Province (20230201087GX).

\normalem
\bibliographystyle{IEEEtran}
\bibliography{myref}

\end{document}